\keywords{cryptography,  threshold scheme, hidden multipliers}
\theoremstyle{plain} 
\begin{document}

\title[Multi-recipient and threshold encryption]{Multi-recipient and threshold encryption\\ based on hidden multipliers}
\author[Vitaly~Roman'kov]{Vitaly~Roman'kov} 	
\address{Sobolev Institute of Mathematics (Omsk Branch), Omsk, Pevtsova~13, 644099, Russia}	
\email{romankov48@mail.ru}  
\thanks{The research was supported in accordance with the state task of the IM SB RAS, project FWNF-2022-003.}	

\begin{abstract}
  \noindent Let $S$ be a pool of $s$ parties and Alice be the dealer. In this paper, we propose a scheme that allows the dealer to encrypt messages in such a way that only one authorized coalition of parties (which the dealer chooses depending on the message) can decrypt. At the setup stage, each of the parties involved in the process receives an individual key from the dealer. To decrypt information, an authorized coalition of parties must work together to use their keys. Based on this scheme, we propose a threshold encryption scheme.   For a given message  $f$ the dealer can choose any threshold $m = m(f).$  More precisely, any set of parties of size at least $m$ can evaluate  $f$; any set of size less than $m$ cannot do this. 
Similarly, the distribution of keys among the included parties can be done in such a way that authorized coalitions of parties will be given the opportunity to put a collective digital signature on any documents. This primitive can be generalized to the dynamic setting, where any user can dynamically join the pool $S$. In this case the new user receives a key from the dealer. Also any user can leave the pool $S$. In both cases, already distributed keys of other users do not change. The main feature of the proposed schemes is that for a given $s$ the keys  are distributed once and can be used multiple times. This property distinguishes the proposed schemes from the most of such schemes known in the literature. However, it should be noted that similar schemes have already been proposed (see, for example, the schemes  by M. Bellare, A. Boldyreva, K. Kurosawa, J. Staddon (2007) and schemes by C. Delerabl$\acute{\rm e}$e and D. Pointcheval (2008)). 

The proposed scheme is based on the idea of hidden multipliers in encryption. As a platform, one can use both multiplicative groups of finite fields and groups of invertible elements of commutative rings, in particular, multiplicative groups of residue rings. We propose two versions of this scheme.
\end{abstract}

\maketitle

\section*{Introduction}\label{sec:Intro}
In the early days of cryptography, most schemes were designed for a single-sender/single-receiver scenario. Currently, there are scenarios where many recipients (or many senders) need to share power to use the cryptosystem. The main motivation behind multi-recipient and threshold cryptography has been to develop methods for working with single-sender/multi-receiver scripts. For these concepts and related definitions of secrecy, see, for example, \cite{BBKS}.

In multi-sender cryptography, the cryptosystem protects information by encrypting it and distributing it among multiple parties. Information in the form of a message is transmitted to the parties in encrypted using the public key and the corresponding private key is shared between the parties involved. To decrypt information, an authorized coalition of parties must cooperate to use their keys. With a threshold cryptosystem, in order to decrypt an encrypted message or sign a message, multiple parties (greater than a certain threshold number) must cooperate in a decryption or signature protocol.  Sharing secrets was introduced in 1979 by Shamir \cite{Shamir} and Blakely \cite{Bl}.
Since then, many applications have emerged for several different types of cryptographic protocols. The basics of threshold cryptography are contained in the fundamental works  \cite{BBH,CG,SDFY,Des1,DF,Des2,F,SG}. See also survey \cite{Bei}  and papers  \cite{RD,BeiDes,CSB,CGMW,DKL,PZ,BEN,VES,OK,KGH,SS,Sti,HC} for some secret sharing schemes.  

For example,  a large organisation is carrying out a complex project that involves various groups of its employees. The organization distributes individual keys among the parties, allowing various groups of them to receive the necessary information about the project. Each such group has the right to receive only a certain part of the full information. Moreover, such a system can provide for the possibility of collective digital signatures on the reports of various groups of parties. This requires a certain system of access for various groups of parties to various pieces of information about the project, which can be constructed using threshold cryptography methods. To achieve this, the keys are distributed in such a way that each coalition of  parties can get a signature from their  keys without disclosing any information about their  keys. This example shows that the considered schemes is a natural primitive. 

The main goal of this paper is to build schemes that provide for the reuse of once distributed secret resources. This is possible only in cases where the allocated private keys are not revealed when the message $f$   is  decrypted on their basis. It should also be possible to add, remove or replace qualified group members without changing their  keys.  These properties show the advantages of the proposed scheme in comparison with the many known  such schemes.

It should be noted that there are schemes with multiple use of the initially distributed  keys. For example, in \cite{CGMW}, the authors propose  a threshold secret sharing scheme based on polynomial interpolation and the Diffie-Hellman problem. In this scheme shares can be used many times for the reconstruction of multiple secrets. This scheme involves the use of hash functions and has a number of other significant differences from the schemes proposed in this paper.  C. Delerabl$\acute{\rm e}$e and D. Pointcheval \cite{DP} proposed  a generalization of threshold public-key encryption  to the dynamic setting,
where any user can dynamically join the system, as a possible recipient; the
sender can dynamically choose the authorized set of recipients, for each ciphertext, the sender can dynamically set the threshold $m$ for decryption capability
among the authorized set. 

We propose  schemes such that the initial distribution of  keys between all participants in the process is carried out either using a secure communication channel, or using the protocol of secret key transfer over an open communication channel. The entire further process is carried out over an public network.

We consider $s$ parties.   
So, we propose new multi-recipient and threshold $(m, s)$-schemes that allow qualified parties to receive  message  $f$. For a given message $f$, the dealer can choose any qualified set of recipients and any threshold $m=m(f)$  without  redistributing  keys. One of the two versions of the  encryption scheme is monotonic and the other is not. The proposed threshold scheme is monotonic, i.e., any set of $k\geq m$ participants is qualified. We also offer two versions of the collective digital signature associated with the proposed versions of multi-receive encryption, respectively.

\bigskip 
 Notation:  $\mathbb{Z}$ -- set of integer numbers, 
 $\mathbb{Z}_n = \mathbb{Z}/n\mathbb{Z}$ -- residue ring,  $\mathbb{N}$ -- set of nonnegative integer numbers, $\mathbb{N}_k = \{1, \ldots , k\}.$ For an element $g$ of some group, $|g|$ denotes its order.

\section{Construction of fields and residue rings with prescribed orders of subgroups of multiplicative groups}\label{sec:1}

The main idea behind the corresponding algorithm  is the following statement similar to  
\cite{Hand}, Fact 4.59 (\cite{Rintro}, Theorem 38, Proposition 39).
\label{pro:1}
\begin{prop} 
\label{pro:1}{\it   Let $b\geq 3$ be an odd integer, and suppose that 
$b=1+rq$, where $q$ is an odd prime and $r$ is an even positive integer. 
\begin{enumerate}
\item
If there exists an integer $a$ satisfying $a^{b-1}\equiv 1 (\bmod\,b)$  and gcd($a^{r}-1, b$) = $1$, 
then for any prime divisor $p$ of the number $b$, $p \equiv  1(\bmod\, 2q)$, in particular, $p \geq  2q + 1.$ 
\item If additional
 the condition $r\leq 4q+2$ is satisfied, then $b$ is prime.
\item If $b$ is prime, the probability that a randomly selected base $a, 1 \leq a \leq b-1$, 
satisfies $a^{b-1}\equiv 1 (\bmod\,b)$  and gcd($a^{r}-1, b$) = $1$ is $\frac{q-1}{q}$.
\end{enumerate}}
\end{prop}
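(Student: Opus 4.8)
The plan is to reduce all three parts to statements about the multiplicative order of $a$ modulo the prime divisors of $b$, exploiting the factorization $b-1=rq$ with $q$ prime. For part (1), I would fix a prime divisor $p$ of $b$ and let $d$ denote the multiplicative order of $a$ modulo $p$. Reducing the hypothesis $a^{b-1}\equiv 1\pmod b$ modulo $p$ shows that $a$ is invertible mod $p$, so $d$ is well defined and satisfies $d\mid b-1=rq$ and $d\mid p-1$. The coprimality hypothesis $\gcd(a^{r}-1,b)=1$ forces $p\nmid a^{r}-1$, i.e. $a^{r}\not\equiv 1\pmod p$, so $d\nmid r$. The key elementary step is then that, since $q$ is prime, the relations $d\mid rq$ and $d\nmid r$ force $q\mid d$, because otherwise $\gcd(d,q)=1$ together with $d\mid rq$ would yield $d\mid r$. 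Hence $q\mid d\mid p-1$. Finally, as $b$ is odd every such $p$ is odd, so $2\mid p-1$; since $q$ is odd we have $\gcd(2,q)=1$, and therefore $2q\mid p-1$, which is exactly $p\equiv 1\pmod{2q}$ and gives $p\geq 2q+1$.

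Part (2) I would settle by a size comparison. If $b$ were composite it would be a product of at least two primes (counted with multiplicity), each at least $2q+1$ by part (1), whence $b\geq(2q+1)^{2}=4q^{2}+4q+1$. On the other hand the extra hypothesis $r\leq 4q+2$ gives $b=1+rq\leq 4q^{2}+2q+1<(2q+1)^{2}$, a contradiction. Hence $b$ must be prime.

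For part (3), I would use that when $b$ is prime Fermat's little theorem makes $a^{b-1}\equiv 1\pmod b$ hold automatically for every $a\in\{1,\dots,b-1\}$, so only the second condition is restrictive. The group $(\mathbb{Z}/b\mathbb{Z})^{*}$ is cyclic of order $b-1=rq$, and in a cyclic group of order $rq$ the equation $x^{r}=1$ has exactly $\gcd(r,rq)=r$ solutions; these are precisely the bases failing $\gcd(a^{r}-1,b)=1$. Thus $rq-r=r(q-1)$ of the $b-1=rq$ admissible bases satisfy both conditions, and the probability is $r(q-1)/(rq)=(q-1)/q$.

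The only genuinely delicate point is the order argument in part (1): one must correctly deduce $q\mid d$ from $d\mid rq$ and $d\nmid r$ via the primality of $q$, and then upgrade $q\mid p-1$ to $2q\mid p-1$ using that $2$ and $q$ both divide $p-1$ and are coprime. Once that divisibility bookkeeping is in place, part (2) is a one-line inequality and part (3) is a routine count of $r$-th roots of unity in a cyclic group, so I expect no further obstacles.
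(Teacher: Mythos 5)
Your proposal is correct, and for parts (1) and (2) it is essentially the paper's own argument: in (1) the paper shows $|a^r|=q$ (since $(a^r)^q=a^{b-1}=1$ and $a^r\not\equiv 1 \pmod p$) and deduces $q\mid |a|\mid p-1$, then upgrades to $2q\mid p-1$ using that $p-1$ is even; your version with the order $d$ of $a$ and the implication ``$d\mid rq$, $d\nmid r$, $q$ prime $\Rightarrow q\mid d$'' is the same divisibility bookkeeping in a slightly different dress. Part (2) is the identical size comparison $(2q+1)^2=4q^2+4q+1 > 4q^2+2q+1 \geq b$. In part (3) you genuinely diverge, and to your advantage: the paper only observes that $x^r=1$ has \emph{at most} $r$ roots in $\mathbb{F}_b$ and concludes that the failure probability is ``no more than $\frac{r-2}{b-3}\sim\frac{1}{q}$,'' so its own proof establishes the stated exact probability $\frac{q-1}{q}$ only approximately. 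You instead invoke cyclicity of $\mathbb{F}_b^{\ast}$, of order $b-1=rq$, to count \emph{exactly} $\gcd(r,rq)=r$ solutions of $x^r=1$ (these being precisely the failing bases, since Fermat makes the first condition automatic), whence exactly $r(q-1)$ of the $rq$ bases succeed and the probability is exactly $\frac{q-1}{q}$. Your version of (3) is thus tighter than the paper's and is the one that actually matches the proposition as stated; no gaps anywhere in the proposal.
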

\begin{proof}
 Let $p$ be a prime divisor of $b$. Then condition (1) implies  $a^{b-1} \equiv 1(\bmod\, p).$ We also have $a^r \not= 1(\bmod\, p).$ On the other hand, $a^{p-1} \equiv 1(\bmod\, p)$ by Fermat's Little Theorem. Then in the group $\mathbb{Z}_p^{\ast}$ we get
 $a^r\not= 1, a^{b-1}=1, a^{p-1} =1$. It follows from the inequality $a^r\not= 1$ together with $(a^r)^q=1$ that $|a^{r}|=q.$ By Lagrange Theorem 
 $|a|\vdots q$. Then $p-1\vdots q$. Since $p-1$ is even $p-1\vdots 2q$  and so $p\equiv 1(\bmod\, 2q)$. The statement (1) is proved. 
 Let (2) is satisfied. Suppose that $b$ is composite. The $b$ is divisible by at least two primes
 $p_1$ and $p_2$. By (1) $p_1, p_2 \geq 2q+1.$   
  Lets do the calculations:
$$(2q+1)^2 = 4q^2 +4q +1 \leq b = 1 + rq  \leq (4q+2)q = 4q^2 + 2q +  1.$$
This inequality obviously false. The statement (2) is proved. 
For prime $b$ the first condition $a^{b-1} \equiv 1(\bmod\, b)$ is true by Fermat's Little Theorem and the second condition  gcd$(a^r-1 , b)=1$ is satisfied if and only if
 $a^r \neq 1 (\bmod\, b)$. In the field $ \mathbb{F}_b$, the equation $x^r = 1 $
has at most $r$ roots, one of which is equal to $1$ and the other is  $-1$. Therefore, on the interval $ 1 < a < b-1 $, there are at most $ r-2 $
numbers $ r $ for which $ a^r = 1$ in the field $ \mathbb{F}_b.$ This means
that the probability of choosing such $a$ is no more than $\frac{r-2}{b-3}
\sim \frac{r}{qr} = \frac {1}{q}$. Thus (3) is proved. 
\end{proof}
The following algorithm   recursively generates an odd prime $b$, and then chooses random  integers $r, q < r$, until $b = 1 + rq$  can be proven prime using  for some base $a$. By
proposition \ref{pro:1} the probability of such bases is $\sim 1 - 1/q$  for prime $b$. On the other hand, if $b$ is composite, then most bases $a$ will fail to satisfy the condition $a^{b-1}\equiv 1 (\bmod\,b)$. 
Let's describe  this algorithm. 
\begin{enumerate}
\item Select a random  odd integer $b = 1 + rq$, where $r$ is an even number. 
\item We start with an  odd prime $q = q_1$.
\item Let's choose a number $r$ at random:
$$
q + 1 \leq r \leq 4q + 2.
$$
\item Consider
$$
b = 1 + rq,\, 
q\leq r \leq 4q+2.
$$
\item
 Choose randomly the number $a = a_1$ within $1 <
 a < b-1$ and check the fulfillment of conditions  from
proposition \ref{pro:1}. If $a = a_1$ does not satisfy these conditions, then we take
another random number $a = a_2$. So we repeat a sufficient number of times:
$a = a_1, a_2, \ldots, a_k$ until we find a suitable value
$a$.

If you succeed in doing this, then $b$ is prime. We put $q = q_2 = b$ and
repeat the construction starting from first step. We do this until
we get a big enough prime.
\end{enumerate}
If, with a large number of trials for $a$, it was not possible to execute the
conditions  of proposition \ref{pro:1}, then we change $r$ and repeat everything again.

Suppose that the constructed number $b$ is indeed
prime. Then the probability of finding the number
$a$ with the given properties  from proposition \ref{pro:1} 
 is  $\sim \frac {1}{q}$.
 
Note also that the so constructed prime $b$ will be
 greater than $q^2$ because $q \leq r$ and $b = 1 + rq $. The primes $q_1, q_2, \ldots $ obtained as a result of this sequential construction grow no less than quadratically.

Let's ask a question: how realistic is it to find
a prime number $b = 1 + rq $ under the indicated constraints $ q \leq r \leq
4q + 2$, choosing an even $r$.

First of all, note that, by the famous Dirichlet theorem, the progression
$n = 2qt + 1$ ($t = 0, 1, 2, 3, \ldots $)  contains infinitely many
prime numbers. We are interested in primes $n$ of the indicated form with
possible small parameters $t = 1, 2, \ldots $. If the
generalized Riemann hypothesis is true, then the smallest prime number in the indicated
sequence does not exceed $c(\varepsilon) q^{2+ \varepsilon}$ for any $\varepsilon > 0$ 
($c(\varepsilon $ is a constant,
depending on $\varepsilon $). Numerical experiments show that
primes in the specified sequence occur quite
often and close to its beginning. Note also that, according to the theory known
numbers to Cramer's hypothesis $p_{n + 1} - p_n = O(\ln^2p_n)$ (here $p_n$
denotes the $n$th prime number in order). A similar conclusion follows from the generalized Riemann hypothesis.

Suppose we need to construct a prime $p$ such that $p-1=r$ and $r$ is divisible by the product of $s+1$ pairwise coprime numbers $d$ and $t = \prod_ {i = 1}^st_i,$ i.e., $r = dtr'$.  This can be effectively done by the process just described, by choosing the parameter $r$  that is divisible by $dt$ and $r'$ that is divisible by  $q$.  Then we obtain the prime number $p = 1 + r$ and build the finite field $\mathbb{F}_p$ of order $p.$  We can assume that $d = t - 1$ which  is coprime with any number $t_i$. The order $p-1$ of the multiplicative group $\mathbb{F}_p^{\ast}$ is divisible by  $r$. 

Therefore $\mathbb{F}_p^{\ast}$ contains $s$ cyclic subroups $T_i$ = gp($u_i$),
 where $|u_i| = t_i, i = 1, \ldots , s,$ and a subgroup $F$ = gp($f_0$) of order $d$. Let $g$ generates $\mathbb{F}_p^{\ast}$. The elements $u_i$ are efficiently computable by the formula $u_j = g^{\frac{r}{t_j}}$.  The element $f$ is computed as $f_0 = g^\frac{r}{d}.$
 
 Of course, there is another way to find the prime number $p$ for which $p-1$ is divisible by the product $dt $, as above. We select the even numbers $r'$ in a certain interval and check the simplicity of the number $ p = 1 + drr' $ using well-known tests, for example, the Miller-Rabin test (see \cite {Hand}). The check goes on until a simple $p$ is obtained. This method is effective and often used in practical cryptography. 
 
The indicated method of constructing the subgroups $T_i, i = 1, \ldots , s$ and $F$, as above, is obviously extended to residue rings, in particular, to rings of the form $\mathbb{Z}_n, n = pq$, where $p$ and $q$ are different primes. In this case, we can construct the primes $p$ and $q$ with the desired sets of divisors for the numbers $p-1$ and $q-1$, and then use them in our construction. More precisely, suppose we want to construct a ring $ \mathbb {Z} _n $ such that the multiplicative group $\mathbb{Z}_n^{\ast}$ of order 
$\varphi (n) = (p-1)(q -1)$ has subgroups $W_1, \ldots, W_k $ of orders $r_1 = t_1s_1, \ldots, r_k = t_ks_k $, respectively, where any pair $ (t_i, s_i) $ consists of two coprime numbers. Then we choose a prime number $p$ such that $p-1$ is divisible by $ t = \prod_{i = 1}^kt_i $, and $ q $  such that $q-1$ is divisible by 
$s = \prod_{i = 1}^ks_i.$
For each  $i = 1, \ldots, k $ let the element $u_i \in \mathbb{Z}_p$ is of order $t_i$, and similarly 
$v_i \in \mathbb{Z}_q $ is of order $s_i.$ By the Chinese Remainder Theorem, from the system of equations 
$$ \begin{cases} w_i \equiv u_i (\bmod \, p), \\ w_i \equiv v_i (\bmod\, q).\end{cases} $$ 
 \noindent we find $w_i$. Obviously, $w_i$ is of order $r_i$ and we can define $W_i = $ gp($w_i$) for $i = 1, \ldots , k.$
 
  Using residue rings instead of fields as platforms for encryption can have its benefits. The schemes proposed in this paper are based on the difficult solvability of calculating the order of an element of a multiplicative group of a field or a residue ring, respectively. In the field $\mathbb{F}_q, q = p^r$, with a known primary decomposition of the number $q-1$ (the order of the multiplicative group $\mathbb{F}_q^{\ast}$), there exists a polynomial algorithm for calculating the order of an arbitrary element $g\in \mathbb{F}_q^{\ast}$. See \cite{Hand}, algorithm 4.79, page 162. The specified primary decomposition makes the proposed schemes vulnerable to the case of a finite field. In particular, schemes are  vulnerable when quantum computers are used to generate such decompositions. In cryptography, when using residue rings as platforms, it is assumed that the order of the ring's multiplicative group is unknown. However, quantum computing in this case also makes the circuit vulnerable.

\section{General organization}
\label{sec:3}

In this section, we give a formal organization of the process. Consider a dealer Alice and an initial pool of $s$ participants $A_1,\ldots , A_s$. At each step, this pool can change. For brevity we keep denotion $s$ in future descriptions.  Alice estimates the possible number $s_{new}$ of new participants in the process. Let $s_{max} = s+s_{new}$. Then Alice chooses a platform: a finite field or a residue ring of sufficiently large size, in which $s_{max}+1$  subgroups $C_1, \ldots , C_{s_{max}}$ and $F$ of sufficiently large pairwise coprime orders can be distinguished. The corresponding process is described in the previous section. Possible decryption keys are the orders of the first $s_{max}$ subgroups $C_1, \ldots , C_{s_{max}}$.  The last subgroup $F$ will serve as the message space, i.e., each message will be encoded by its element. The corresponding process should be described in a special way. For finite fields and residue rings such processes are well known.

Next, Alice randomly selects a subset of $s$   subgroups among $C_1, \ldots , C_{s_{max}}$ and distributes their orders (keys) among the participants. The participant $A_i$ thus receives the private key $t_i$. In the future, new participants may appear who, upon registration, receive their private keys, chosen by Alice from the previously unused orders of cyclic subgroups $C_1, \ldots , C_{s_{max}}$. The keys of the retired participants are not used in the future. Therefore, this probabilistic distribution scheme allows Alice to generate and distribute  individual keys  among the parties.

Individual keys $t_i$ are transmitted by Alice in encrypted form, regardless of the model used. For such transmission, Alice opens the encryption system. The system can be either symmetric or public key. All participants in the process must have complete necessary information about the encryption system. Various protocols can be used to transfer keys $t_i$, for example, the Diffie-Hellman protocol.

 After this stage, the parties and dealer can communicate following one of the next two Communication models. In the first model (private channel model), the parties communicate through a complete synchronous network of secure and reliable point-to-point channels. Any set of parties has access to the messages sent to the parties in the set. In the second model (broadcast channel model),  the parties communicate through a public  channel. A set of parties can obtain all the messages circulating between the parties. Alice publishes the encrypted messages $f$ in the network she uses. A more secure method using a trusted server is as follows.
Let the scheme is not monotonic. The server opens a separate room where it invites every participant from the authorized set. Each of them receives a password to enter and the ability to operate with the received message. On entry, participant $A_i$ modifies the message using its key $t_i$. A member outside of that coalition cannot do so. Participants can log in with nicknames to hide who owns the key if it is somehow calculated by other participants. It is possible that the present participants are visible only to a trusted server. Note that in the versions of the protocol proposed below, the key is calculated as a discrete logarithm.

More secret is the scheme in which members of the coalition pass their keys to a trusted server upon entry. The server performs the corresponding operations without declaring intermediate results. Only the final result is announced. With such an organization, the coalition members do not have data to calculate the keys of other coalition members. They may only attempt to compute the shared key of the coalition. Therefore, it must ensure the security of the protocol. If the coalition is small, the dealer can use the keys of virtual participants prepared in advance by him, formally including them in the coalition. 
It is assumed that they are known to the server, which will perform their operations on its own. 

 In case the scheme is monotonous or threshold, the trusted server first gathers a plurality of participants in a separate location, also using one-time passwords. It then checks to see if the set of participants gathered is capable of deciphering the message. After that, it allows them to carry out their operations or  carries out these operations, as described above, announcing only the final result.
 
 A coalition signature is carried out in a similar way. 

\section{Multi-recipient encryption  protocol}
\label{sec:4}
  
   The main idea used to construct a new multi-recipient  protocol is the encryption scheme proposed in the works of the author \cite {RomRSA} and  \cite {RomRSAPDM}. Suppose Alice installs the following cryptographic system, whose platform $K$ is either the multiplicative group $\mathbb{F}_p^{\ast}$ of a finite field $\mathbb{F}_p$, where $p$ is a prime number,  or the multiplicative group $\mathbb{Z}_n^{\ast}$ of a residue ring $\mathbb{Z}_n, n = pq $, where $ p $ and $ q $ are distinct primes. In the field case, the parameter $ p $ is public. In the  residue ring case, the parameters $ p, q $ are private, and $ n $ is public. 
We denote by $K$ the multiplicative group $\mathbb{F}_p^{\ast}$ or $\mathbb{Z}_n^{\ast}$ respectively.   
   
 The scheme works as follows.   
   
   Alice chooses two subgroups $F$ and $H$ of $K$ of coprime exponents: exp($F$) $= k$ and exp($H$) $= l$,  by the method described in the previous section.  Recall that, the exponent of a group is defined as the least common multiple of the orders of all elements of the group.  The subgroup $F$ serves as the message space, and $H$ is the space of hidden multipliers. Both of these subgroups $F$ and $H$ are publicly available. Then $k$ and $l$ are Alice's private  numbers. She  also computes a private number  $l'$ such that $ll'=1 (\bmod\, k).$ It follows, that $f^{ll'}=f$ for any element 
$f\in F.$ 
Suppose Bob wants to send a message  to Alice. Alice will receive and decrypt this message. The algorithm works  as follows:
\begin{enumerate}
\item Bob encodes the message as $f\in F$, chooses $h\in H$ at random and sends $c = hf$ to Alice.  
\item Alice computes $$c^{ll'} = (h^l)^{l'}f^{ll'} = f. $$
\end{enumerate}

\medskip
{\bf Secrecy.}
The secrecy of the proposed scheme is based on the intractability of calculating the order of an element in a finite field or in a residue ring of the considered type.  Note that the ability to calculate the orders of elements in the  residue ring $\mathbb{Z}_n$ makes it possible to reveal transmitted messages without knowing the decryption key. Consider, for example, the RSA  system with standard notation for its elements ($\mathbb{Z}_n, e,$ and so on). Indeed, the order of the encrypted message $c = m^e$ in the multiplicative group 
$\mathbb{Z}_n^{\ast}$ is equal to the order of the original message $m$, since the degree $e$  (the encryption key) is relatively prime to the order of the multiplicative group $\mathbb{Z}_n^{\ast}.$ Let's say the attacker calculated this order of $t$.  Then he can find a one-time decryption key $d_m$  from the equality $ed_m=1 (\bmod \, t)$ and calculate 
$c^{d_m} = m (\bmod\, n)$. This scheme has a number of advantages over the standard RSA. First, the encryption uses an easier-to-perform multiplication operation, rather than exponentiation. Secondly, different keys are used, which provides different types of ciphertexts for the same message. This gives the semantic secrecy property.

\begin{rem}
Of course, one can use as a platform the multiplicative group $K^{\ast}$ of any commutative associative ring $K$ with unity, provided that large subgroups of coprime exponents can be chosen in $K$, and the problem of calculating the order of an element is intractable. 
One of the advantages of this system over the original RSA version is its semantic secrecy. See \cite{RomRSA}, \cite{RomRSAPDM}  or \cite{RomCryptology} for details.  Using residue rings instead of fields as platforms for encryption can have its benefits. The just described scheme is based on the difficult solvability of calculating the order of an element of a multiplicative group of a field or a residue ring, respectively. In the field $\mathbb{F}_q, q = p^r$, with a known primary decomposition of the number $q-1$ (the order of the multiplicative group $\mathbb{F}_q^{\ast}$), there exists a polynomial algorithm for calculating the order of an arbitrary element $g\in \mathbb{F}_q^{\ast}$. See \cite{Hand}, algorithm 4.79, page 162. 
\end{rem}

Let ${\mathcal S}$ be the system  which is  organized and managed by Alice.  Let 
 $ \{A_1, \ldots, A_s\} $ is the set of users in ${\mathcal S}$ at the considered step.
 
 \bigskip
 {\bf Version 1} 

\medskip
When setting up the system $ {\mathcal S}$, Alice takes a set of pairwise coprime positive integers $t_1, \ldots , t_{s_{max}}$.

Let $t = \prod_{i=1}^st_i$. Alice also defines $d = t -1$ or $d= t +1.$  Then Alice chooses as a  platform the group $K$  of large order $r$  where $r = dtr',$ for some $r'\in \mathbb{N}$, while simultaneously defining the set of subgroups 
$C_i =$ gp($u_i$) ($i = 1,  \ldots , s$) and $F$ of $K$ of the   orders $t_1, \ldots , t_s$ and $d$, respectively. This group can be chosen as a subgroup of the multiplicative group of a finite field or residue ring according to the process described in  Section \ref{sec:2}. The group  $K$ is public. The subgroups $C_1, \ldots , C_s, F$ and their corresponding orders $t_1, \ldots , t_s, d$ are private. Let $H=\prod_{i=1}^sC_i$. The subgroup $F$ serves as the space of possible  messages $f$, and $H$ is the space of hidden multipliers.

Then Alice distributes the numbers (keys) $t_i$ among the current users $A_1, \ldots , A_s$ of  $ {\mathcal S}$. For simplicity me assume that these keys are distributed at such a way, that each participant $A_i$ gets $t_i$ for $i = 1, \ldots , s.$  These keys are for future reuse. The remaining unused keys are stored for distribution to new users of the system, if any appear in the future. This distribution is carried out either over a secure communication channel, or is transmitted in encrypted form over an open channel. These keys are for future reuse.  

 Let $ f, f \in F $ be a message that Alice wants to send to some (qualified) set of users of the system $S(f) = \{A_ {i_1}, \ldots , A_{i_w}| 1 \leq i_1 < \ldots < i_{w} \leq s\}.$ Alice acts as follows:
\begin{enumerate}
\item Alice randomly selects nontrivial elements $v_{i_j} \in C_{i_j},\, j = 1, \ldots , w.$ Then she computes  $c = \prod_{j=1}^wv_{i_j}\cdot f^{t/\prod_{j=1}^wt_{i_j}}$.    She sends $c$ to the coalition $S(f).$
\item Members of the coalition $ S(f) $ sequentially raise the obtained element $c$ to the power  $t_j$ for  $j = 1, \ldots, w.$ In the case $d=t-1$, they get element
$$f^t = f^{d+1} = f.$$ 
If $d = t+1$, they get
$$f^t = f^{d-1} = f^{-1}$$
\noindent and compute $f.$ 
\end{enumerate}

Obviously, Alice can send the message $f$ to any possible coalition of users in this way. Any unqualified coalition will not be able to reveal the message $f$ in some natural way. If this coalition does not contain the user $A_{i_j}$ it cannot remove the factor $v_{i_j}$.

This scheme is not monotonous. Moreover, interference with the disclosure of a secret by any member outside the qualified coalition results in an incorrect secret.

\bigskip
{\bf Version 2.} 

\medskip 
This version can be used to decrease  $d$. 

When setting up the system $ {\mathcal S}$, Alice takes a set of pairwise coprime positive integers $t_1, \ldots , t_{s_{max}}$ and $d$ such that $t_i = 1 (mod\, d)$ for all $i.$  Such a set of numbers $t_i$ exists for any $d$ by the famous Dirichlet theorem, according to which there are infinitely many such primes $t_i$. 

Now Alice takes as above the multiplicative  group $K$  of a finite field or residue ring of  order  $r = dt_{max}r', t_{max} = \prod_{i=1}^{n_{max}}t_i,$  while simultaneously defining the set of subgroups 
$C_i =$ gp($u_i$) ($i = 1,  \ldots , s_{max}$) and $F$ of the  group $K$ of  orders $t_1, \ldots , t_{s_{max}}$ and $d$, respectively. The corresponding algorithm has been given in Section \ref{sec:2}. The group $K$ is public. The subgroups $C_1, \ldots , C_{s_{max}}, F$ and their corresponding orders $t_1, \ldots , t_{s_{max}}, d$ are private. Let $H=\prod_{i=1}^{s_{max}}C_i$. The subgroup $F$ serves as the space of messages $f$, and $H$ is the space of hidden multipliers. 

For simplicity we assume that the numbers $t_i$  are distributed as keys among the current users of  ${\mathcal S}$ at such a way, that each participant $A_i$ gets $t_i$ for $i = 1, \ldots , s.$ Let's denote  $t = \prod_{i=1}^st_i.$ These keys are for future reuse. The remaining unused keys are stored for distribution to new users of the system, if any appear in the future. This distribution is carried out either over a secure communication channel, or is transmitted in encrypted form over an open channel. 

 Let $ f\in F $ be a message that Alice wants to send to some (qualified) set of users of the system $S(f) = \{A_ {i_1}, \ldots ,  A_{i_w} | 1 \leq i_1 < \ldots < i_{w} \leq s\}.$ Alice acts as follows:

\begin{enumerate}
\item Alice randomly selects nontrivial elements $v_{i_j} \in C_{i_j},\, j = 1, \ldots , w.$ Then she computes  $c = \prod_{j=1}^wv_{i_j}\cdot f$.    She sends $c$ to the coalition $S(f).$
\item Members of the coalition $ S(f) $ sequentially raise the  element $c$  and elements successively received from it to the power $t_j$ for  $j = 1, \ldots, w.$ 
\end{enumerate}

The first step gives the element 
$$c_{i_1} = v_{i_2}^{t_{i_1}} \cdots v_{i_w}^{t_{i_1}}\cdot f.$$
This means that the first factor has been removed from the record. The rest of the factors before $f$ retained their orders, since these orders are coprime to $t_{i_1}$.
Continuing the process, they sequentially remove all factors except $f$, which remains unchanged for all exponentiations.
As a result, they get the element $f$.
 
Unlike version 1, this version is monotonous. Any coalition containing a qualified coalition also reveals the secret. This is due to the fact that each raising to the power  does not change the factor $f$. 

\bigskip
{\bf Coalition signature.}

The following descriptions use the above notations.
As usual, the process of setting up and verifying a signature is in a certain sense the opposite of the process of setting up and recovering  a message by coalition. In this case, there is a dealer, say, Alice, who organizes the process, a set of parties $A_1, \ldots , A_s$, and a set of possible verifiers, which for simplicity we consider to be singleton $D$. As before, Alice creates an auxiliary cryptographic system. Using this system, Alice distributes among the parties  not the numbers $t_1, \ldots , t_s$, as described in the above schemes, but the generating elements $u_1, \ldots , u_s$ of  subgroups $C_1, \ldots , C_s$  relatively. Any element $f$ of the subgroup $F$ defined as in the schemes 1 and 2 can be considered as a document to sign by an authorized coalition  $\{A_{i_1}, \ldots , A_{i_k}\}$.  

Let $f\in F$  be a signature document for some coalition $\{A_{i_1}, \ldots , A_{i_k}\}$.    The signing procedure consists in the fact that each $A_{i_j}$ first selects an element $a_{i_j}$ from the subgroup gp($u_{i_j}$) distributed to him. Then they successively multiply $f$ by the selected elements, resulting in a signed document $f_{sign}=a_{i_1} \ldots a_{i_k} \cdot f$.

Alice also gives  the numbers $t_1, \ldots , t_s$ to the verifier $D$. To check the correctness of the signature of the given coalition on the document $f$, $D$ calculates $t =t_{i_1} \ldots t_{i_k}$ and then raises $f_{sign}$ to the power $t$. In both schemes 1 and 2, with the correct statement of the signature, it should turn out to be $f$. The efficiency and security of this algorithm are similar to the corresponding qualities of the above schemes. 

It should be noted that if scheme 2 is used in this way, the verifier can verify the correctness of the signature of any sub-coalition. If this is not acceptable, appropriate additional steps should be taken.

\section{ ($m, n$)-threshold encryption scheme }
\label{sec:4}
This proposition bases on the version 2 of the multi-recipient encryption protocol described in Section \ref{sec:4}. 
Let us prove a preliminary statement.
\begin{lem}
\label{le:3}
Let $s\in \mathbb{N}.$  For any $k\in \mathbb{N}_s$, there exists a set $T(k) = 
\{ t_1, \ldots , t_{l_k}\}$, which can be represented as a union of  subsets $T_j(k),\, j = 1, \ldots , s,$ such that the union of any $k$ subsets coincides with $T(k)$, and the union of a smaller number is strictly less than $T(k)$.
\end{lem}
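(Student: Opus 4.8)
The plan is to realize the required structure by an explicit dual construction and then to verify the two covering conditions directly; there is essentially no analytic content, so the whole argument reduces to combinatorial bookkeeping once the right indexing is chosen.

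Concretely, I would index the ground set by the $(k-1)$-element subsets of $\mathbb{N}_s=\{1,\ldots,s\}$. Let $\mathcal{A}$ denote the family of all such subsets, assign to each $A\in\mathcal{A}$ a distinct element $t_A$, and put $T(k)=\{t_A : A\in\mathcal{A}\}$, so that $l_k=|\mathcal{A}|={s \choose k-1}$. For each $j\in\mathbb{N}_s$ define
$$T_j(k)=\{\,t_A : A\in\mathcal{A},\ j\notin A\,\},$$
the set of elements whose index omits $j$. In the intended application each abstract $t_A$ is instantiated as one of the pairwise coprime integers produced in Section~\ref{sec:1}, but for the lemma only this set-theoretic structure is needed.

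Next I would check that any $k$ of the subsets already exhaust $T(k)$. For an index set $I\subseteq\mathbb{N}_s$, an element $t_A$ is missing from $\bigcup_{j\in I}T_j(k)$ precisely when $j\in A$ for every $j\in I$, that is, when $I\subseteq A$; since $|A|=k-1$ this is impossible as soon as $|I|=k$, so $\bigcup_{j\in I}T_j(k)=T(k)$. For the opposite direction, given any $J\subseteq\mathbb{N}_s$ with $|J|<k$, I would choose a $(k-1)$-subset $A\supseteq J$ (possible because $|J|\le k-1\le s$); then $j\in A$ for all $j\in J$, hence $t_A\notin T_j(k)$ for every $j\in J$ and therefore $t_A\notin\bigcup_{j\in J}T_j(k)$. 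Thus every union of fewer than $k$ of the subsets is a proper subset of $T(k)$, which yields both required properties at once.

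The only genuine idea is spotting this duality: the ground set is indexed by the $(k-1)$-subsets, and $T_j(k)$ collects exactly the indices that avoid $j$. After that, both verifications come down to the trivial fact that a set of size $k-1$ cannot contain a set of size $k$ yet does contain itself, so I expect no real obstacle. I would also note the boundary cases $k=1$, where $T(1)=\{t_\emptyset\}$ and every $T_j(1)=T(1)$, and $k=s$, where each $T_j(s)$ is a singleton, to confirm the construction degenerates correctly.
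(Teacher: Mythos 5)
Your construction is correct: indexing the ground set by the $(k-1)$-element subsets $A\subseteq\mathbb{N}_s$ and putting $T_j(k)=\{t_A: j\notin A\}$ does give both properties, since a $k$-element index set $I$ can never satisfy $I\subseteq A$ with $|A|=k-1$, while any $J$ with $|J|\le k-1$ extends to such an $A$. This is, however, not quite the paper's construction. The paper argues by induction on $k$: starting from a single element at $k=1$, at each stage it adjoins $\binom{s}{k-1}$ new elements, one for each $(k-1)$-element collection of indices, and places each new element in every $T_j$ except those $j$ belonging to the corresponding collection. The resulting $T(k)$ is therefore indexed by \emph{all} subsets of $\mathbb{N}_s$ of size at most $k-1$, with $l_k=1+\binom{s}{1}+\cdots+\binom{s}{k-1}$, whereas yours keeps only the top layer and achieves the smaller $l_k=\binom{s}{k-1}$. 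The underlying duality (an element indexed by $A$ lies in $T_j$ iff $j\notin A$) is the same in both arguments, and your version is leaner and arguably cleaner as a proof of the lemma as stated. What the paper's cumulative version buys is the chain of inclusions $T(1)\subset T(2)\subset\cdots\subset T(s)$ and $T_j(1)\subset T_j(2)\subset\cdots\subset T_j(s)$ recorded immediately after the proof (with $|T(s)|=2^s-1$); this nesting is what allows the $(m,s)$-threshold scheme of Section~4 to distribute a single key $\tilde{t}_j$ per participant that simultaneously serves every threshold $m$. Your layers can of course be accumulated by taking unions over $k'\le k$ to recover exactly the paper's family, so nothing is lost, but if you intend the lemma to feed that application you should state the nested version explicitly.
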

\begin{proof}
Induction by $k$. The statement is true for $k = 1$, when one can define $l_1 = 1$ and $T_1(j)= \{t_1\}$ for any $j$. Assume that the statement is true for $k-1$ and $T(k-1) =  \{t_1, \ldots , t_{l_{k-1}}\}.$ We enumerate all pairs of  subsets $T_j(k-1), j = 1, \ldots , s$ with distinct numbers as $V_1, \ldots , V_{\binom{s}{k-1}}$.  Then we take elements $t_{l_{k-1}+1}, \ldots , 
t_{l_{k-1}+ \binom{s}{k-1}}$ and include each $t_{l_{k-1}+i}$ into all $T_j(k-1)$ except for those contained in  $V_i$.  Therefore we can  set 
$T(k) = \{t_1, \ldots , t_{l_k}\}$, where $l_k =  1 + \binom{s}{1} + \ldots + 
\binom{s}{k-1}$, satisfying the required condition. 
\end{proof}
Note that for the indicated construction $ T(1) \subset T (2) \subset \ldots \subset T(s) $ and $|T (n)| = 2^s-1.$ Moreover,  for any $j$, the inclusions $ T_j(1) \subset T_j(2) \subset \ldots \subset T_j(s) $ are satisfied.

Let ${\mathcal S}$ be the system  which is  organized and managed by Alice.  Let 
 $ \{A_1, \ldots, A_s\} $ be the set of users in ${\mathcal S}.$ 
 
 \medskip
For  $b=2^s-1$, Alice takes a set of pairwise coprime positive integers $T = \{t_1, \ldots , t_b\} \cup \{d\}$.  Let $t = \prod_{i=1}^rt_i.$  Then Alice (following section 2) chooses a large simple finite field $\mathbb{F}_p,\, p - 1 = r,$, where $r = dtr',$ for some $r'\in \mathbb{N} $, simultaneously defining  subgroups
$W_i =$ gp($w_i$) ($i = 1, \ldots , b$) and $F$ of the multiplicative group $\mathbb{F}_p^{\ast}$ of  orders $t_1, \ldots , t_b $ and $d$ respectively. 

By Lemma \ref{le:3}, Alice defines a representation of the set $T$ in the form of a union of subsets $T(k)$
for $k = 1, \ldots , b.$  For each $j \in \mathbb{N}_n$, Alice computes $\bar{t}_j = \prod_{i\in T(j)}t_i$. Then she computes the keys $\tilde{t}_j = \bar{t}_j\bar{t}_j^{-}$, where $\bar{t}_j^{-} = 
\bar{t}_j^{-1} (\bmod\, d).$

Then Alice distributes the keys $\tilde{t}_j$ among the participants according to the indices.  Each participant $S_j$ receives the key $\tilde{t}_j$.  

We suppose that Alice wants to develop a  ($m, b$) threshold encryption  scheme. In each subgroup $W_i$, where $i \leq l_m$, Alice chooses a nontrivial element $g_i$. Then she computes $$c= \prod_{i \in T(m)}g_i\cdot f$$
\noindent 
and sends this element to all participants. 

Let $C$ be a coalition, consisting of $z \geq m$ members. Coalition members consistently raise $c$ to exponents equal to their keys. Since the product of all their keys is divisible by any value $t_i $ for $i \leq l_k $, the result is the message $f$. This does not happen if the coalition has fewer than $k$ members. Hence, it is a ($k, b$) secret sharing scheme. 

\bigskip
{\bf Properties and security}

\medskip
The semantic secrecy of the above schemes is based on the difficult solvability of the problem of calculating the exponent of an element in the platform under consideration (finite field or commutative associative ring with unity, in particular,  residue ring). Indeed, let there be two secrets $m_1$ and $m_2$, one of which is transmitted in the form $c = tm$, where $t$ and $m $ have coprime orders. If the attacker can calculate the exponents of the elements,  he will calculate $e_i = $exp($m_i$) and $e_i' = $ exp($c^{- 1}m_i$) for $i = 1, 2.$ Then he compares the sets of prime divisors for two pairs $ e_1, e_1 '$ and $ e_2, e_2' $. Only in the pair corresponding to the transmitted secret,  such a set for $ e_i '$ does not contain prime divisors of $e_i$.

If the problem of calculating the exponent of a protocol platform element is intractable, then this scheme is semantically secret. 

In the case of the field $\mathbb {F}_p$, to calculate the orders of the elements of the group $ \mathbb{F}_p^{\ast} $, it is sufficient to know the primary decomposition of the number $p-1 $ (see \cite{Hand}). The ability to solve the problem of calculating the order of an element of the group $\mathbb{Z}_n^{\ast}, n = pq$ ($p, q $ are different primes) gives an algorithm for calculating transmitted messages, that is, it solves the RSA problem (see \cite{RomRSA}). 

There is very little public data in the proposed schemes. In the case of the field 
$\mathbb{F}_p $, only its order $p$ is known, but the primary decomposition of the number $p-1$ is unknown. In the case of the residue ring $ \mathbb{Z}_n$, only the module $n$ is known.

\medskip

\section{Conclusions}

In this paper, we have studied two versions of a new scheme of multi-recipient encryption and threshold encryption  based on hidden multipliers in finite fields or commutative associative rings, in particular, residue rings.  We also propose a ($m, n$) - threshold scheme, where $ m $ is chosen by the dealer for each session without any additional allocations. The main feature of the proposed schemes is that the keys are distributed once among users  and can be used multiple times. Our schemes are secure against passive attacks and semantically secure under assumption that the problem of calculation the exponent of element of a protocol platform is intractable.  It is also easy to see that the version 2 of the scheme is monotonic, but the version 1 is not. We also offer two versions of the collective digital signature associated with the proposed versions of multi-receive encryption, respectively. All proposed schemes allow you to dynamically add and remove users of the main pool. Previously distributed keys are not changed.

\bibliographystyle{plain}
\bibliography{main_bibliography1}

 \end{document}